\newtheorem{thm}{Theorem}[section]
\newtheorem{definition}{Definition}[section]
\newtheorem{lemma}[thm]{Lemma}
\newtheorem{remark}{Remark}[section]
\newtheorem{theorem}[thm]{Theorem}
\newtheorem{corollary}[thm]{Corollary}
  \date{}
\begin{document}
\pagenumbering{roman}
\begin{center}
\fontsize{20pt}{20pt}\selectfont{Construction Is All You Need}\\
\vspace{0.5em}
\fontsize{16pt}{20pt}\selectfont{------Extended Abstract of ``SAT Requires Exhaustive Search" } 
\end{center}
\vspace{1em}
\qquad Our paper introduces a new framework for studying computational hardness, which can be seen as an extension of Gödel’s framework for proving the incompleteness theorem. Gödel’s framework demonstrates that a constructed self-referential proposition is unprovable, meaning that reasoning based on syntax cannot determine the semantics of this proposition. Within this new framework, all aspects of computational hardness, including the P vs NP problem, should be reconsidered. This famous problem obscures the surprising fact that we do not know any partial result of complexity lower bounds (e.g., a super-linear lower bound), but we know barriers to obtaining such results. In many cases, barriers arise simply because we are not on the right track. We need to explore new approaches for proving lower bounds from the starting point of computer science. The goal is to understand the limits of computation and clarify why some problems are harder than others.

Computation is fundamentally a mechanical process of reasoning. In other words, it can be viewed as mechanized mathematics. There are two main lines of research on the limits of computation. Gödel’s results address the limits of reasoning within finite formal systems, while Turing’s results explore the limits of whether mechanical processes can be realized in the physical universe. Computational time fundamentally relies on the reasoning of concrete examples rather than the computability of abstract problems. The P vs NP problem is a natural extension of Turing’s concept of computability. However, even if a problem is uncomputable, each instance might still be solved quickly, though no general (mechanical) method exists to handle all instances. In other words, computability is unrelated to computational time. Therefore, Turing’s framework of computability is not suitable for studying computational hardness. This also explains why there are so many barriers in current complexity theory studies: we might have taken the wrong path from the very beginning.   

In fact, the Turing machine itself is an assumption about machines (i.e., mechanized finite formal systems) that can be realized in the physical universe. In essence, the Turing machine represents a fundamental physical assumption, and Turing's findings on uncomputability signify the physical limits of mechanically solving all instances of a problem. The computational hardness of concrete examples arises from the mathematical limits of reasoning. It is unnecessary to discuss the mechanical solving of these examples based on the concept of Turing machines. This is because if the reasoning of concrete examples requires a long time, then solving these examples mechanically will also take a long time. We can also say that computational hardness results are essentially mathematical impossibility results of reasoning. Any mathematical impossibility results cannot be proved without restrictions on the form of mathematics. To derive such results, it is essential to make a mathematical (rather than physical) assumption about reasoning.

The essence of computational hardness lies not in the contradiction between deterministic polynomial computation and non-deterministic polynomial computation, but in the contradiction between non-brute-force computation and brute-force computation. This contradiction essentially questions whether syntax can replace semantics (i.e., whether the part can replace the whole through reasoning). For CSPs, this question reduces to whether a certain number of subproblems can replace the whole problem, based on a mathematical assumption about the reasoning of CSPs. We then construct self-referential examples of CSP and ultimately prove by contradiction that a certain number of subproblems cannot replace the whole problem for these CSP examples. This is very similar to Gödel’s study on whether a finite set of true mathematical statements (i.e., axioms) can replace an infinite set of all true mathematical statements through reasoning. Therefore, our result can also be regarded as an incompleteness theorem for CSPs.

Gödel’s proof and our proof address two different cases (i.e., first-order logic and propositional logic, respectively). The essence of both proofs (i.e., the use of self-reference) is the same, but there are differences in constructing the self-referential object. Specifically, this involves constructing an object such that negating it results in an object equivalent to itself. In Gödel’s work, the self-referential object is a logical formula. In our work, the self-referential object is an infinite set of satisfiable and unsatisfiable examples. These self-referential examples necessitate brute-force computation because the part cannot replace the whole for these examples. In other words, reasoning based on syntax is ineffective, and only brute-force computation based on semantics can solve these examples. It should also be emphasized that our hardness results apply to finite formal systems under a mathematical assumption about reasoning. Turing machines, on the other hand, are finite formal systems under a physical assumption about machines. Thus, while both are subsets of a larger set, they cannot be directly compared. In other words, our work and current complexity theory can be seen as two different paths originating from the same point, both aiming to explain why some problems are harder than others.  

Finally, we address the two main barriers in current complexity theory. The first, the relativization barrier, indicates that the self-reference (diagonalization) method cannot separate P from NP. In our work, the self-reference method is applied to constructed satisfiable and unsatisfiable examples, thus avoiding the relativization barrier. Regarding the natural proof barrier, the self-reference method can also circumvent it. Therefore, employing the self-reference method on constructed examples allows us to bypass both barriers. Many people believe that complexity theory is difficult because it deals with computational hardness. This is a misunderstanding. On the contrary, our work shows that extreme hardness (i.e., brute-force computation) is easy to understand through the construction of self-referential examples. In other words, construction is all you need.
\newpage
\pagenumbering{arabic}

\title{SAT Requires Exhaustive Search}
\author{Ke Xu$^{1}$,\quad Guangyan Zhou$^{2}$\\
\footnotesize $^{1}$State Key Lab of Software Development Environment, Beihang University, Beijing, 100083, China \\
\footnotesize kexu@buaa.edu.cn
\\
\footnotesize $^2$Department of Mathematics and Statistics, Beijing Technology and Business University, Beijing, 100048, China\\
\footnotesize zhouguangyan@btbu.edu.cn
}

\maketitle

\begin{abstract}

In this paper, by constructing extremely hard examples of CSP (with large domains) and SAT (with long clauses), we prove that such examples cannot be solved without exhaustive search, which is stronger than P $\neq$ NP. This constructive approach for proving impossibility results is very different (and missing) from those currently used in computational complexity theory, but is similar to that used by Kurt G\"{o}del in proving his famous logical impossibility results. Just as shown by G\"{o}del's results that proving formal unprovability is feasible in mathematics, the results of this paper show that proving computational hardness is not hard in mathematics. Specifically, proving lower bounds for many problems, such as 3-SAT, can be challenging because these problems have various effective strategies available for avoiding exhaustive search. However, in cases of extremely hard examples, exhaustive search may be the only viable option, and proving its necessity becomes more straightforward. Consequently, it makes the separation between SAT (with long clauses) and 3-SAT much easier than that between 3-SAT and 2-SAT. Finally, the main results of this paper demonstrate that the fundamental difference between the syntax and the semantics revealed by G\"{o}del's results also exists in CSP and SAT.

\end{abstract}

\section{Introduction}
Model RB is a random constraint satisfaction problem (CSP) model that was proposed by Xu and Li \cite{xu2000} in 2000, which could also be encoded to well-known NP-complete problems like SAT and CLIQUE. The purpose of this model was to address the issue of trivial unsatisfiability that was prevalent in previous random CSP models.  One of the key features of Model RB is that its domain size $d$ grows with the number of variables $n$. Additionally, Model RB has been proved to exhibit exact phase transitions from satisfiability to unsatisfiability, making it a useful tool for analyzing and evaluating the performance of algorithms. Over the last two decades, Model RB has been extensively researched from multiple perspectives, as evidenced by various studies (e.g., \cite{xu2005,xu2006,xuAI2007,liu2011,cai2011,ZZ2011,saitta2011,zhao2012,fan2012,lecoutre2013,huang2014,xu2015,liu2015,knuth2015,xuwei2016,fang2016,wang2016,li2018,kara2020,zhou2022}).
Moreover, this model has gained significant popularity and widespread use in renowned international algorithm competitions. A random instance of Model RB with a planted solution named frb100-40, where $n=100$ and $d=40$, has remained elusive since it was made available online in 2005 as a 20-year challenge for algorithms$\footnote{See https://tinyurl.com/2p53xbd7}$. Despite numerous attempts, no one has been able to solve it thus far. In summary, the results suggest that Model RB possesses nice mathematical properties that can be easily derived.  In contrast to its mathematical tractability, the random instances of this model, particularly those generated in the phase transition region, present a significant challenge for various algorithms, proving to be extremely difficult to solve.

As shown in the proof of G\"{o}del's well-known incompleteness theorem \cite{godel1931}, the constructive approach plays an indispensable role in revealing the fundamental limitations of finite formal systems. An algorithm can be regarded as a finite formal system to deduce conclusions step by step. In this paper, we will study the limitations of algorithms based on the constructive approach. Specifically, we will investigate whether it is always possible to develop an algorithm that can replace the naive exponential exhaustive search. The Strong Exponential Time Hypothesis \cite{calabro2009} conjectures that no such algorithm exists for SAT. This problem is similar to that proposed by David Hilbert in the early period of 20th century if it is always possible to construct a finite formal system that can replace an area of mathematics (arithmetic for example) containing infinitely many true mathematical statements. These two problems ask essentially the same philosophical question: \emph{whether the part can always replace the whole within the limits of reasoning}. We think that it is more fundamental and has a higher priority than whether an efficient (polynomial-time) algorithm exists.

The advantages of Model RB enable us to choose specific threshold points at which instances with a symmetry requirement  are on the edge of being satisfiable and unsatisfiable. In fact, we will show that there exist instances at exactly the same point which are either satisfiable with exactly one solution or unsatisfiable but only fail on one constraint. The satisfiability of such instances can be flipped under a special symmetry mapping, thus the set of these instances form a fixed point under the symmetry mapping and this enables us to construct the most indistinguishable examples that can be understood as the source of computational hardness.  Based on the symmetry mapping and driven by the famous method of diagonalization and self-reference, we show that unless exhaustive search is executed, the satisfiability of a certain constraint (thus the whole instance) is possible to be changed, while the subproblems of the whole instance remain unchanged. Therefore, whether the whole instance is satisfiable or unsatisfiable cannot be distinguished without exhaustive search. In summary, if we can construct the most indistinguishable examples with exactly the same method and the same parameter values (which is very special and not an easy task), then it is not hard to understand and prove why they are extremely hard to solve.
\section{Model RB}

 A random instance $I$ of Model RB consists of the following:
 \begin{itemize}
   \item  A set of variables $\mathcal{X}=\{x_1,...,x_n\}$: Each
   variable $x_i$ takes values from its domain $D_i$, and the domain size is $|D_i|=d$, where $d=n^\alpha$ for $i=1,...,n$, and $\alpha>0$ is a constant.
   \item  A set of constraints $\mathcal{C}=\{C_1,...,C_m\}$ ($m=rn\ln d$, where $r>0$ is a constant): for each
   $i=1,...,m$, constraint $C_i=(X_i,R_i)$. $X_i=(x_{i_1},x_{i_2},...,x_{i_k})$
   ($k\geq2$ is a constant) is a sequence of $k$ distinct variables chosen
   uniformly at random without repetition from $\mathcal{X}$. $R_i$
   is the permitted set of tuples of values which are selected uniformly without repetition
   from the subsets of $ D_{i_1}\times
D_{i_2}\times\cdots\times D_{i_k}$, and $|R_i|=(1-p)d^k$ where $0<p<1$
is a constant.
 \end{itemize}

 In this paper, we have a symmetry requirement of the permitted set of each constraint, and the $m$ permitted sets will be generated in the following way. Initially,  we generate a symmetry set $R$ which contains $(1-p)d^k$ tuples of values, then generate each permitted set $R_i$ of the constraint $C_i$ ($i=1,2,...,m$) by running random permutations of  domains of $k-1$ variables in $X_i$ based on $R$. For example, if $k=2$ and the domains are $D_1=D_2=\{1,2,3,4\}$, then $R=\{(1,1),(1,2),(2,1),(2,2),(3,3),(3,4),(4,3),(4,4)\}$ is a symmetry set. If we run a random permutation of $D_1$, e.g., $f(1)=3,f(2)=1,f(3)=4,f(4)=2$, then we get a permitted set $\{(3,1),(3,2),(1,1),(1,2),(4,3),(4,4),(2,3),(2,4)\}$. Through this method all $R_i(i=1,...,m)$ are isomorphic and every domain value of the variables share the same properties.

A constraint $C_i=(X_i,R_i)$ is said to be satisfied by an assignment $\sigma\in D_1\times
D_2\times\cdots\times D_n$ if the values assigned to $X_i$ are in the set $R_i$. An assignment $\sigma$ is called a solution if it satisfies all the constraints. $I$ is called satisfiable if there exists a solution, and called unsatisfiable if there is no solution.
 It has been proved that Model RB not only avoids the trivial asymptotic behavior but also has exact phase transitions from satisfiability to unsatisfiability.
Indeed, denote $\mathbf{Pr}[I\text{ is SAT}]$ the probability that a random instance $I$ of Model RB is satisfiable, then

\begin{thm}[\cite{xu2000}]
Let $r_{cr} = \frac{1}{-\ln(1-p)}$.
If $\alpha>1/k$, $0<p<1$ are two constants and $k,p$ satisfy the inequality $k\ge1/(1-p)$, then
\begin{align*}
\lim_{n\rightarrow \infty}\mathbf{Pr}[I\text{ is SAT }]=1\ \text{ if }\ r < r_{cr},\\
\lim_{n\rightarrow \infty}\mathbf{Pr}[I\text{ is SAT }]=0\ \text{ if }\ r > r_{cr}.
\end{align*}
\end{thm}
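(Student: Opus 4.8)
\bigskip
\noindent\textbf{Proof proposal.}
The statement is the sharp satisfiability threshold of Model RB, and the natural route is the first- and second-moment method applied to $X=X(I)$, the number of assignments $\sigma\in D_1\times\cdots\times D_n$ satisfying all $m$ constraints. I would begin with the unsatisfiable side $r>r_{cr}$, which needs only the first moment. For a fixed $\sigma$ and a single constraint $C_i$, the probability that $\sigma$ satisfies $C_i$ is $|R_i|/d^k=1-p$ (for the symmetric construction this still holds once the seed set $R$ is balanced over the last coordinate), and the $m$ constraints are drawn independently, so $\mathbf{E}[X]=d^n(1-p)^m$. Plugging in $d=n^\alpha$ and $m=rn\ln d$ gives $\mathbf{E}[X]=n^{\alpha n(1+r\ln(1-p))}$; since $\ln(1-p)<0$, the exponent is negative precisely when $r>1/(-\ln(1-p))=r_{cr}$, so $\mathbf{E}[X]\to0$ and Markov's inequality gives $\mathbf{Pr}[I\text{ is SAT}]=\mathbf{Pr}[X\ge1]\le\mathbf{E}[X]\to0$.

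For the satisfiable side $r<r_{cr}$ the plan is the second moment: show $\mathbf{E}[X^2]\le(1+o(1))(\mathbf{E}[X])^2$, so that $\mathbf{Pr}[X\ge1]\ge(\mathbf{E}[X])^2/\mathbf{E}[X^2]\to1$. Grouping ordered pairs $(\sigma,\tau)$ by the number $j$ of coordinates on which they disagree, one has
\[
\frac{\mathbf{E}[X^2]}{(\mathbf{E}[X])^2}=\sum_{j=0}^{n}w_j\,c_j^{\,m},\qquad w_j=\frac{\binom{n}{j}(d-1)^j}{d^n},\qquad c_j=\frac{P_j}{(1-p)^2},
\]
where $P_j$ is the probability that one random constraint is satisfied by both $\sigma$ and $\tau$ given they disagree on $j$ variables. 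Conditioning on how many of the constraint's $k$ variables lie among the $j$ disagreement coordinates yields $c_j=1+\tfrac{p}{1-p}(1-j/n)^k$ up to lower-order terms, so $c_j\ge1$. The weights $w_j$ form a probability distribution, $\sum_j w_j=1$, concentrated (it is $\mathrm{Binomial}(n,1-1/d)$) around $j^{*}=n(1-1/d)=n-n^{1-\alpha}$, where $1-j^{*}/n=1/d$ and hence $c_{j^{*}}^{\,m}=\big(1+\tfrac{p}{1-p}d^{-k}\big)^m=\exp\!\big(\Theta(n^{1-\alpha k}\ln n)\big)=1+o(1)$ exactly because $\alpha>1/k$. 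Thus the contribution of the concentration window around $j^{*}$ is $1+o(1)$, which matches the target; it remains to kill every other range of $j$.

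Writing $t=1-j/n\in(0,1]$ and using Stirling together with the large-deviation rate of $\mathrm{Binomial}(n,1-1/d)$, the $j$-th term has logarithm $\big(\alpha n\ln n\big)\,h(t)+O(n)$, where
\[
h(t)=-\,t+r\,\ln\!\Big(1+\tfrac{p}{1-p}\,t^{k}\Big).
\]
For $t$ bounded below by a fixed constant it therefore suffices to prove $h(t)<0$, which gives exponential decay; the remaining small-$t$ range is handled by combining the concentration of the weights near $j^{*}$ with the smallness of $c_j-1$ there (again using $\alpha>1/k$). The inequality $h(t)<0$ on $(0,1]$ is equivalent to $1+\tfrac{p}{1-p}t^{k}<e^{t/r}$: as $t\to0^{+}$ it holds because $k\ge2$ makes $t^{k}$ of smaller order than $t/r$; at $t=1$ it is exactly $\tfrac1{1-p}<e^{1/r}$, i.e.\ the hypothesis $r<r_{cr}$; and the hypothesis $k\ge1/(1-p)$ is what excludes an interior crossing, via a sign/convexity analysis of $F(t)=e^{t/r}-1-\tfrac{p}{1-p}t^{k}$ using $F(0)=0$ and $F'(0)=1/r>0$. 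Summing the at most $n+1$ terms then yields $\mathbf{E}[X^2]=(1+o(1))(\mathbf{E}[X])^2$.

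The main obstacle is exactly this last step: the first moment and the bookkeeping for $\mathbf{E}[X^2]$ are routine, but controlling the interior of the overlap sum — proving $1+\tfrac{p}{1-p}t^{k}<e^{t/r}$ uniformly on $(0,1]$ rather than only at its endpoints, and simultaneously managing the transition between the concentration window (where $\alpha>1/k$ is used) and the exponentially small bulk (where $r<r_{cr}$ and $k\ge1/(1-p)$ are used) — is the delicate part, and is precisely where the unusual hypothesis $k\ge1/(1-p)$ is needed. A secondary point requiring care is that the symmetry requirement imposed in this paper alters the ``both tuples allowed'' probability compared with the i.i.d.\ model: one must verify that, under the random-permutation construction of the $R_i$ from a (generically chosen) balanced seed $R$, the quantity $\mathbf{Pr}[\sigma|_{X}\in R_i\wedge\tau|_{X}\in R_i]$ is still $(1-p)^2(1+o(1))$ whenever $\sigma|_X\ne\tau|_X$, so that the formula for $P_j$ above — and hence the whole argument — carries over unchanged.
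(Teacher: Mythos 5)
Your proposal is correct and follows essentially the same route as the source: the paper does not reprove this theorem but cites it from \cite{xu2000}, where the argument is exactly the first-moment bound for $r>r_{cr}$ and the second-moment/overlap decomposition for $r<r_{cr}$, and your sum $\sum_j w_j c_j^{\,m}$ with $c_j=1+\tfrac{p}{1-p}t^k$ is the same expansion the paper itself reuses as $\sum_S F(S)$ in the proof of Lemma \ref{lem:onethird} (with $S=n-j$ counting agreements rather than disagreements). Your identification of where $\alpha>1/k$ and $k\ge 1/(1-p)$ enter (the concentration window near $t=1/d$ and the exclusion of an interior zero of $h(t)$, respectively) matches the role these hypotheses play in the cited proof.
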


In the following we will present some properties of Model RB which are important to prove our main theorems in the next section. From here on we tacitly take $r=r_{cr}+\frac{\delta}{n\ln d}$, where $\delta=\frac{\ln2}{-\ln(1-p)}$, and take
\begin{align}\label{de:alpha}
\alpha>\max\left\{1,\inf\{\alpha:\omega<0\},-2\big(\frac{100}{99}\big)^2\frac{\ln(1-p)}{k},\frac{100\ln(1-p)}{k\ln(1-\frac p3)}\right\},
\end{align}
  where $\omega=1+\alpha(1-r_{cr}pk)$. In fact note that $\omega=1+\alpha\left(1+\frac{pk}{\ln(1-p)}\right),$ and a simple calculation yields that $pk+\ln(1-p)>0$ for all $p\in(0,1)$ under the condition that $k\ge\frac1{1-p}$. Thus it is possible to take $\alpha>0$  large enough such that $\omega<0$.

 First,  we bound the probability that a random RB instance is satisfiable.

\begin{lemma}\label{lem:onethird}
Let $I$ be a random CSP instance of Model RB with $n$ variables and $rn\ln d$ constraints. Then
$$\frac13\le \mathbf{Pr}(I\text{ is SAT})\le\frac12.$$
\end{lemma}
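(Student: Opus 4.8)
The plan is a first- and second-moment argument for the number of solutions; the parameters $r$ and $\delta$ are tuned precisely so that the expectation comes out to exactly $\tfrac12$. Write $N=N(I)$ for the number of solutions of $I$, i.e.\ $N=\sum_{\sigma}\mathbf{1}[\sigma\text{ is a solution}]$, summed over all $d^{n}$ assignments $\sigma\in D_1\times\cdots\times D_n$. Because the $m$ constraints are generated independently and a fixed assignment satisfies a given constraint with probability exactly $1-p$ (by the balancedness of the symmetry set $R$, the random permutations of domains preserve this), $\mathbf{E}[N]=d^{n}(1-p)^{m}$. Since $m=rn\ln d=r_{cr}n\ln d+\delta$ with $r_{cr}=1/(-\ln(1-p))$ and $\delta=\ln 2/(-\ln(1-p))$, one gets $(1-p)^{r_{cr}n\ln d}=\exp(-n\ln d)=d^{-n}$ and $(1-p)^{\delta}=\exp(-\ln 2)=\tfrac12$, so
\[
\mathbf{E}[N]=d^{n}\cdot\tfrac12\,d^{-n}=\tfrac12 .
\]
The upper bound is then immediate from Markov's inequality: $\mathbf{Pr}(I\text{ is SAT})=\mathbf{Pr}(N\ge1)\le\mathbf{E}[N]=\tfrac12$.

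For the lower bound I would apply the Paley--Zygmund (second moment) inequality $\mathbf{Pr}(N\ge1)\ge(\mathbf{E}[N])^{2}/\mathbf{E}[N^{2}]$, so it suffices to prove $\mathbf{E}[N^{2}]\le 3(\mathbf{E}[N])^{2}=\tfrac34$ for $n$ large. Expanding over the overlap pattern of two assignments, $\mathbf{E}[N^{2}]=\sum_{j=0}^{n}d^{n}\binom{n}{j}(d-1)^{j}q_j^{m}$, where $q_j$ is the probability that both members of an ordered pair $(\sigma,\tau)$ of assignments differing in exactly $j$ coordinates satisfy one random constraint. Two terms dominate. The term $j=0$ equals $d^{n}(1-p)^{m}=\mathbf{E}[N]=\tfrac12$. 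For the term $j=n$ one needs an estimate of $q_n$: using that $R$ is balanced (every value of every coordinate occurs in exactly $(1-p)d^{k-1}$ tuples of $R$), averaging over the random scope and the random permutations gives $q_n=(1-p)^{2}\bigl(1+O(d^{-1})\bigr)$; since $\alpha>1$ forces $m\,d^{-1}=o(1)$ we obtain $q_n^{m}=(1-p)^{2m}(1+o(1))$ and a contribution $d^{n}(d-1)^{n}(1-p)^{2m}(1+o(1))=\tfrac14(1-1/d)^{n}(1+o(1))=\tfrac14+o(1)$. Together with the estimate below for the remaining $j$, this gives $\mathbf{E}[N^{2}]=\tfrac12+\tfrac14+o(1)\le\tfrac34$, whence $\mathbf{Pr}(N\ge1)\ge\tfrac{1/4}{3/4}=\tfrac13$.

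The technical heart is to show that the range $1\le j\le n-1$ contributes only $o(1)$. With $x=j/n$, the leading-order bound $q_j\le(1-p)\bigl[(1-p)+p(1-x)^{k}\bigr](1+o(1))$, and $m=(1+o(1))r_{cr}n\ln d$, each term in this range is at most $\exp\!\bigl(n\ln d\cdot g(x)+O(n)\bigr)$ with
\[
g(x)=x+\frac{\ln\!\bigl(1-p\,(1-(1-x)^{k})\bigr)}{-\ln(1-p)},\qquad g(0)=g(1)=0 .
\]
On any compact subset of $(0,1)$ one checks that $g(x)<0$ strictly, and since $\ln d=\alpha\ln n\to\infty$ the factor $\exp(n\ln d\cdot g(x))$ overwhelms $\exp(O(n))$, making those terms exponentially small. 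Near $x=0$, $g(x)=g'(0)x+o(x)$ with $g'(0)=1-\tfrac{pk}{-\ln(1-p)}<0$; this sign is exactly where the hypothesis $k\ge 1/(1-p)$ enters, through the inequality $pk+\ln(1-p)>0$ noted before the statement. Hence the sum over small $j$ behaves like $(1+d^{\,g'(0)})^{n}-1\asymp n\,d^{\,g'(0)}=n^{\omega}$, which is $o(1)$ precisely because $\omega=1+\alpha g'(0)<0$; this is what the condition $\alpha>\inf\{\alpha:\omega<0\}$ in \eqref{de:alpha} delivers. Near $x=1$, $g'(1)=1>0$, so the sum over $j$ close to $n$ is of order $n/d=n^{1-\alpha}=o(1)$, using $\alpha>1$. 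The two remaining bounds on $\alpha$ in \eqref{de:alpha} (those involving $100/99$ and $100$) act as quantitative safeguards, keeping the error terms --- $q_j$ against its leading approximation, $\binom{n}{j}$ against $n^{j}/j!$, and $m$ against $r_{cr}n\ln d$ --- uniformly small enough that the bound closes with the explicit constant $\tfrac34$ rather than $\tfrac34+\varepsilon$.

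I expect the second-moment estimate to be the main obstacle. Two points in it demand care: the intermediate-overlap range, where the combinatorial factor $\binom{n}{j}(d-1)^{j}$ and the probabilistic gain $q_j^{m}$ compete directly and the hypotheses on $\alpha$ are precisely what tip the balance; and the symmetry requirement on the constraints, which --- unlike the usual Model RB, where each $R_i$ is an independent uniformly random subset --- makes every $R_i$ a random permuted copy of one structured set $R$ and so correlates the satisfaction events of two far-apart assignments for individual constraint choices. The remedy is that this correlation vanishes in expectation over the random scopes and permutations (this is where the balancedness of $R$ is used), so the product over the $m$ independent constraints still matches the independent-subset behavior up to a factor $1+o(1)$.
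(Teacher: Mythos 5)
Your proposal is correct and follows essentially the same route as the paper: the first moment $\mathbf{E}[N]=d^{n}(1-p)^{m}=\tfrac12$ with Markov for the upper bound, and the Paley--Zygmund/Cauchy inequality with the overlap decomposition of $\mathbf{E}[N^{2}]$, where only the full-overlap and zero-overlap terms survive (contributing $\tfrac12$ and $\tfrac14$), yielding $\mathbf{Pr}(N\ge1)\ge\tfrac13$. Your treatment of the intermediate overlaps via the exponent function $g(x)$ and of the correlations introduced by the symmetry requirement is a more explicit rendering of what the paper delegates to the cited references, but it is the same argument.
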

\begin{proof}
Let $X$ be the number of solutions of $I$, then
\begin{equation}
\mathbf{Pr}(X>0)\le \mathbf{E}[X]=d^n(1-p)^{rn\ln d}=\frac12.
\end{equation}
 As shown in \cite{xu2000},
\begin{align}\label{eq:secondmm}
\mathbf{E}[X^2]&=\sum_{S=0}^nd^n\binom n S (d-1)^{n-S}\left((1-p)\frac{\binom Sk}{\binom nk}+(1-p)^2(1-\frac{\binom Sk}{\binom nk})\right)^{rn\ln d}\\
\nonumber&= \mathbf{E}[X]^2\left(1+O(\frac1n)\right)\sum_{S=0}^nF(S),
\end{align}
where $F(S)=\binom n S (1-\frac1d)^{n-S}(\frac1d)^S\left[1+\frac{p}{1-p}s^k\right]^{rn\ln d}$,  and $S=ns$ is the number of variables for which an assignment pair take the same values.
Note that $\alpha>1$, using an argument similar to that in \cite{xu2000,ZZ2011}, we obtain that only the terms near $S=0$ and $S=n$ are not negligible, and
\begin{equation}\label{eq:cauchy}
\frac{\mathbf{E}[X^2]}{\mathbf{E}[X]^2}\le1+\frac1{\mathbf{E}[X]}+o(1)=3+o(1).
\end{equation}
Indeed, asymptotic calculations show that
\begin{align*}
&F(0)=1-o(1),F(i)=(1+o(1))n^{i(1-\alpha)},...,\\
&F(n-i)=(1+o(1))\exp\{i(\ln n+\ln d-pkr\ln d)\}/\mathbf{E}[X],F(n)=1/\mathbf{E}[X],
\end{align*}
where $i=1,2,...$ is an integer. Note that $\exp\{i(\ln n+\ln d-pkr\ln d)\}=n^{i\omega+o(1)}$ and $\alpha>1,\omega<0$ are constants, thus the upper bound of (\ref{eq:cauchy}) comes from $F(0)$ and $F(n)$.

Using the Cauchy inequality, we get $\mathbf{Pr}(X>0)\ge\mathbf{E}[X]^2/\mathbf{E}{[X^2]}\ge\frac13$.
\end{proof}
As an immediate consequence of Lemma \ref{lem:onethird} we obtain a lower bound of the probability that $I$ has exactly one solution.
 \begin{corollary}\label{lem:solutionpair}
Let $I$ be a random CSP instance of Model RB with $n$ variables and $rn\ln d$ constraints. Then the probability that $I$ has exactly one solution is at least $1/6$.
\end{corollary}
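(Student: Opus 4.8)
The plan is to deduce the bound from the two facts already established, without re-entering any moment computation: the exact first moment $\mathbf{E}[X]=d^n(1-p)^{rn\ln d}=\frac12$ of the solution count $X$, and the lower bound $\mathbf{Pr}(X\ge 1)=\mathbf{Pr}(I\text{ is SAT})\ge\frac13$ from Lemma~\ref{lem:onethird}. The underlying idea is that, since the expected number of solutions is only $\frac12$ while an instance is satisfiable with probability at least $\frac13$, there is not enough mass among the multi-solution instances to account for all of the satisfiable ones, so most satisfiable instances must in fact have a unique solution.

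Concretely, I would first bound the contribution of instances with two or more solutions to the first moment. Writing
\[
\frac12=\mathbf{E}[X]=\sum_{j\ge1}j\,\mathbf{Pr}(X=j)\ \ge\ \mathbf{Pr}(X=1)+2\,\mathbf{Pr}(X\ge2),
\]
one gets $\mathbf{Pr}(X\ge2)\le\frac14-\frac12\mathbf{Pr}(X=1)$. Then I would substitute this into the identity $\mathbf{Pr}(X\ge1)=\mathbf{Pr}(X=1)+\mathbf{Pr}(X\ge2)$ to eliminate $\mathbf{Pr}(X\ge2)$:
\[
\mathbf{Pr}(X=1)\ \ge\ \mathbf{Pr}(X\ge1)-\Big(\frac14-\frac12\mathbf{Pr}(X=1)\Big),
\]
which rearranges to $\frac12\mathbf{Pr}(X=1)\ge\mathbf{Pr}(X\ge1)-\frac14\ge\frac13-\frac14=\frac1{12}$, and hence $\mathbf{Pr}(X=1)\ge\frac16$.

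There is no genuine obstacle here; the only point requiring care is that both inputs from Lemma~\ref{lem:onethird} are used at full strength — the precise value $\mathbf{E}[X]=\frac12$ (an equality, not merely an upper bound) and the constant $\frac13$ in the satisfiability lower bound. Any weakening of the latter would immediately degrade the final constant, so whatever delicacy exists in this corollary lies entirely upstream, in the second-moment estimate that powers Lemma~\ref{lem:onethird}; the corollary itself is a one-line consequence once those two numbers are in place.
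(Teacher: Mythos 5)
Your proposal is correct and is essentially identical to the paper's proof, which sets $\rho_1=\mathbf{Pr}(X=1)$, $\rho_{\ge2}=\mathbf{Pr}(X\ge2)$ and combines $\tfrac12=\mathbf{E}[X]\ge\rho_1+2\rho_{\ge2}$ with $\rho_1+\rho_{\ge2}\ge\tfrac13$ to get $\rho_1\ge\tfrac16$. (One minor remark: only the upper bound $\mathbf{E}[X]\le\tfrac12$ is actually needed, not the equality.)
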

\begin{proof}
Let $\rho_1$ be the probability that $I$ has exactly one solution, and $\rho_{\ge2}$ be the probability that $I$ has at least two solutions. Then from Lemma \ref{lem:onethird}, we have
\begin{align*}
&\ \mathbf{E}[X]=\frac12\ge\rho_1+2\rho_{\ge2},\\
&\mathbf{Pr}(X>0)=\rho_1+\rho_{\ge2}\ge\frac13.
\end{align*}
Therefore $\rho_1\ge1/6$.
\end{proof}

Next we show that if a random instance is unsatisfiable, then w.h.p.$\footnote{We say a property holds w.h.p. (with high probability) if this property holds with probability tending to 1 as the number of variables approaches infinity.}$ it fails at only one constraint. We introduce the following definitions.
\begin{definition}
Let $I$ be a CSP instance. A constraint $C$ is called a self-unsatisfiable constraint if there exists an assignment under which $C$ is the only unsatisfied constraint in $I$. If variable $x$ is contained in $C$, then $x$ is called a self-unsatisfiable variable. If $I$ is unsatisfiable and every variable is a self-unsatisfiable variable, then $I$ is called a self-unsatisfiable formula.
\end{definition}
\begin{lemma}\label{lem:selfunsat}
Let $I$ be a random CSP instance of Model RB with $n$ variables and $rn\ln d$ constraints. If $I$ is unsatisfiable, then w.h.p.  $I$ is  a self-unsatisfiable formula.
\end{lemma}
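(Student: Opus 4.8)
The plan is to reduce the claim to a first‑moment computation together with a sharp lower‑tail bound and a union bound over variables. Fix a variable $x\in\mathcal X$; for an assignment $\sigma$ write $\mathrm{viol}(\sigma)\subseteq\mathcal C$ for the set of constraints it violates, and let $Y_x$ be the number of assignments $\sigma$ for which $\mathrm{viol}(\sigma)$ is a singleton $\{C_i\}$ with $x\in C_i$. The elementary point is that, on the event that $I$ is unsatisfiable, $x$ is a self-unsatisfiable variable \emph{if and only if} $Y_x\ge 1$: an assignment counted by $Y_x$ is precisely a witness that $x$ lies in a self-unsatisfiable constraint, and conversely. Hence "$I$ is unsatisfiable but not a self-unsatisfiable formula'' is contained in $\{\exists x:Y_x=0\}$, and Lemma~\ref{lem:onethird} gives
\[
\mathbf{Pr}\big(I\text{ is not self-unsatisfiable}\mid I\text{ is UNSAT}\big)\le 2\,\mathbf{Pr}(\exists x:Y_x=0)\le 2\sum_{x\in\mathcal X}\mathbf{Pr}(Y_x=0),
\]
so it suffices to prove $\mathbf{Pr}(Y_x=0)=o(1/n)$ for each $x$.

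For the first moment, since $\mathbf{Pr}(x\in C_i)=k/n$, the permitted sets are (conditionally) independent across constraints, and $d^n(1-p)^m=\tfrac12$ by Lemma~\ref{lem:onethird}, one obtains
\[
\mathbf{E}[Y_x]=d^n\cdot m\cdot\frac kn\cdot p(1-p)^{m-1}=\frac{kpr\ln d}{2(1-p)}=(c+o(1))\ln n,\qquad c=\frac{kp\alpha}{2(1-p)(-\ln(1-p))}.
\]
A short calculation using $k\ge 1/(1-p)$ and $k\ge 2$ shows $2(1-p)(-\ln(1-p))<kp$ for all $p\in(0,1)$, so $c>\alpha>1$ by \eqref{de:alpha}; in particular $\mathbf{E}[Y_x]$ grows strictly faster than $\ln n$.

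The crux is to upgrade this to $\mathbf{Pr}(Y_x=0)\le n^{-(1+\varepsilon)}$ for some $\varepsilon>0$. The bare second moment will not do — it gives only $\mathbf{Pr}(Y_x=0)=O(1/\mathbf{E}[Y_x])=O(1/\log n)$, which does not survive the union bound over the $n$ variables. Instead one uses that $Y_x$ is a sum of $d^n$ indicators with tiny individual means and only local dependence (the indicators for $\sigma$ and $\tau$ are independent unless $\sigma$ and $\tau$ agree on the variables of some common constraint), so $Y_x$ is essentially Poisson with mean $\mathbf{E}[Y_x]$, yielding $\mathbf{Pr}(Y_x=0)=e^{-(c+o(1))\ln n}=n^{-c+o(1)}$ with $c>1$. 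To make this rigorous I would condition on the constraint hypergraph and on the permitted sets of all constraints \emph{not} containing $x$; this fixes $\deg(x)$ (the number of constraints on $x$) and the set $\mathcal T$ of partial assignments to $\mathcal X\setminus\{x\}$ satisfying all of those constraints, after which
\[
Y_x=\sum_{\tau\in\mathcal T}\ \sum_{v\in D_x}\mathbf 1\big[(\tau,v)\text{ violates exactly one of the }\deg(x)\text{ remaining, independently generated constraints}\big],
\]
a conditionally much simpler sum to which a Janson/Chen--Stein-type lower-tail bound applies, once one knows — with the required probability $1-o(1/n)$, proved as in Lemma~\ref{lem:onethird} and Corollary~\ref{lem:solutionpair}, this being where the precise form of $\alpha$ in \eqref{de:alpha} (in particular the rôle of $\omega$) is used — that $\deg(x)=(1+o(1))km/n$ and $|\mathcal T|=\Theta(d^{k-1})$.

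I expect the main obstacle to be exactly this last step: controlling the weak dependence of $Y_x$ quantitatively, and above all ruling out with probability $1-o(1/n)$ the rare event $\mathcal T=\varnothing$ — i.e. that the instance obtained from $I$ by deleting all constraints on $x$, which lies $\Theta(\log n)$ constraints below the satisfiability threshold, is still unsatisfiable. The second-moment method bounds this probability only by $O(n^{1-\alpha})$, which is not $o(1/n)$ when $\alpha\le 2$; getting past it requires exploiting the sharpness of the phase transition of Model RB, not merely its location.
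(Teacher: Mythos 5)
Your proposal is a plan, not a proof: you yourself flag that the decisive step --- upgrading $\mathbf{E}[Y_x]=\Theta(\ln n)$ to $\mathbf{Pr}(Y_x=0)=o(1/n)$ --- is left open, and that is precisely where the difficulty of the lemma lives. The reduction to ``$Y_x\ge1$ for every $x$'', the first-moment computation, and the observation that the bare second moment only yields $O(1/\log n)$ (too weak for the union bound over $n$ variables) are all correct. But the Janson/Chen--Stein route you sketch is not carried out, and the two obstacles you name are real: the indicators in $Y_x$ are correlated through the shared random permitted sets, and the conditioning step requires that the instance with all constraints on $x$ deleted be satisfiable with probability $1-o(1/n)$, which the second-moment method does not give. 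So as written the proposal has a genuine gap at its core step.

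For comparison, the paper takes a different decomposition. It does not work with the aggregate count $Y_x$; instead, for each \emph{single} constraint $C$ it defines $N$ to be the number of assignments violating exactly $C$, computes $\mathbf{E}[N]=d^n(1-p)^{rn\ln d-1}p$ and bounds $\mathbf{E}[N^2]/\mathbf{E}[N]^2\le 3/p$ by reusing the second-moment computation of Lemma~\ref{lem:onethird}, concluding that each constraint is self-unsatisfiable with probability at least $p/3$ --- a \emph{constant}, not $o(1/n)$. It then lower-bounds $\mathrm{Deg}(x)\ge\frac{1}{100}rk\ln d$ by Chernoff, and bounds the probability that $x$ is not self-unsatisfiable by $(1-p/3)^{\frac{1}{100}rk\ln d}=n^{\frac{1}{100}rk\alpha\ln(1-p/3)}$, which the choice of $\alpha$ in \eqref{de:alpha} makes $o(1/n)$. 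This sidesteps your Poisson-approximation machinery, but note that the product step treats the events ``$C_i$ is not self-unsatisfiable'' as independent across the $\mathrm{Deg}(x)$ constraints containing $x$, which is not justified in the paper --- these events are correlated through the same instance, exactly the dependence you were worried about. So your proposal is more candid about where the real probabilistic work lies, but neither your sketch nor a literal reading of the paper's argument closes that dependence issue; if you want to complete your route, the missing ingredient is a quantitative lower-tail bound for $Y_x$ (or a correlation inequality for the per-constraint events) valid at the $o(1/n)$ scale.
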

\begin{proof}
First we show that for any constraint $C$ of $I$,  with positive probability there exists an assignment which satisfies all constraints except $C$. In fact let $N$ be the number of such assignments, then
\begin{align}\label{eq:expectationN}
\mathbf{E}[N]=d^n(1-p)^{rn\ln d-1}p.
\end{align}
Using a similar argument as in \cite{xu2000}, we have
\begin{align*}
\mathbf{E}[N^2]
=&\sum_{S=0}^nd^n\binom n S (d-1)^{n-S}\left((1-p)\frac{\binom Sk}{\binom nk}+(1-p)^2(1-\frac{\binom Sk}{\binom nk})\right)^{rn\ln d-1}\cdot\\
&\left(p\frac{\binom Sk}{\binom nk}+p^2(1-\frac{\binom Sk}{\binom nk})\right).
\end{align*}
Hence, (\ref{eq:secondmm}),(\ref{eq:cauchy}) and (\ref{eq:expectationN}) ensure that
\begin{align*}
\frac{\mathbf{E}[N^2]}{\mathbf{E}[N]^2}=\frac{\mathbf{E}[X^2]}{\mathbf{E}[X]^2}\cdot\frac{1+\frac{1-p}{p}s^k}{1+\frac{p}{1-p}s^k}\le\frac1p\frac{\mathbf{E}[X^2]}{\mathbf{E}[X]^2}\le\frac3p.
\end{align*}
Then $\mathbf{Pr}(N>0)\ge\frac{\mathbf{E}[N]^2}{\mathbf{E}{[N^2]}}\ge\frac p3$. Therefore the probability that $C$ is a self-unsatisfiable constraint is at least $\frac p3$.

Next, since the number of constraints is $rn\ln d$, the average degree of each variable $x\in\mathcal{X}$ is $rk\ln d$. By the Chernoff Bound,
\begin{align*}
\mathbf{Pr}\left[\text{Deg}(x)\le\frac1{100}rk\ln d\right]\le e^{-(99/100)^2rk\ln d/2}=n^{-(99/100)^2rk\alpha/2},
\end{align*}
where  Deg$(x)$ denotes the degree of variable $x$. From the requirement (\ref{de:alpha}) we know that $1-(99/100)^2rk\alpha/2<0$, thus
\begin{align}\label{eq:degree}
n\mathbf{Pr}\left[\text{Deg}(x)\le\frac1{100}rk\ln d\right]\le o(1).
\end{align}
Therefore, almost surely all variables have degree at least $\frac1{100}rk\ln d$.

Furthermore, note that each variable appears in at least $\frac1{100}rk\ln d$ constraints and the probability that each constraint appears to be a self-unsatisfiable constraint is at least $\frac p3$, thus the probability that $x$ is not a self-unsatisfiable variable is at most $\left(1-\frac p3\right)^{\frac1{100}rk\ln d}$.

Note that (\ref{de:alpha}) entails that $1+\frac1{100}rk\alpha\ln(1-p/3)<0$, therefore the probability that there exists a variable which is not self-unsatisfiable is at most
\begin{align}\label{eq:selfformula}
n\left(1-\frac p3\right)^{\frac1{100}rk\ln d}=n^{1+\frac1{100}rk\alpha\ln(1-p/3)}=o(1).
\end{align}
Thus w.h.p. all variables are self-unsatisfiable variables.
\end{proof}
\begin{remark}
We claim that Lemmas \ref{lem:onethird} and \ref{lem:selfunsat} hold for any domain size greater than polynomial. Take exponential domain size $d=\beta^n$ (where $\beta>1$ is a constant) for example, similarly, the dominant contributions of $\mathbf{E}[X^2]/\mathbf{E}[X]^2$ come from $F(0)=1-o(1)$ and $F(n)=1/\mathbf{E}[X]$, and asymptotic calculations show that $F(i)=\Theta\left((\frac nd)^{i}\right)$ and
\begin{align*}
F(n-i)=(1+o(1))\frac1{\mathbf{E}[X]}\exp\left\{i\ln n+i\Big(1+\frac{pk}{\ln(1-p)}\Big)\ln d\right\}
\end{align*}
are negligible for small integer $i$, since $1+\frac{pk}{\ln(1-p)}<0$. Moreover, probability analysis holds more easily in the proof of Lemma \ref{lem:selfunsat} if $d=\beta^n$ ((\ref{eq:degree}) and (\ref{eq:selfformula})).

\end{remark}

Next we define a \emph{symmetry mapping} of a constraint which changes its permitted set slightly.
\begin{definition}
Consider a random instance $I$ of Model RB with $k=2$.  Assume that $C=(X,R)$ is a constraint of $I$ and $X=(x_1,x_2)$, then a symmetry mapping of $C$ is to change $R$ by choosing $u_1,u_2\in D_1$ ($u_1\ne u_2$), $v_1,v_2\in D_2$ ($v_1\ne v_2$), where $(u_1,v_1),(u_2,v_2)\in R$ and $(u_1,v_2),(u_2,v_1)\notin R$, and then exchanging $u_1$ with $u_2$ (see Figure \ref{map}).
\end{definition}
\begin{figure}[h]
  \centering
  \includegraphics[width=0.9\columnwidth]{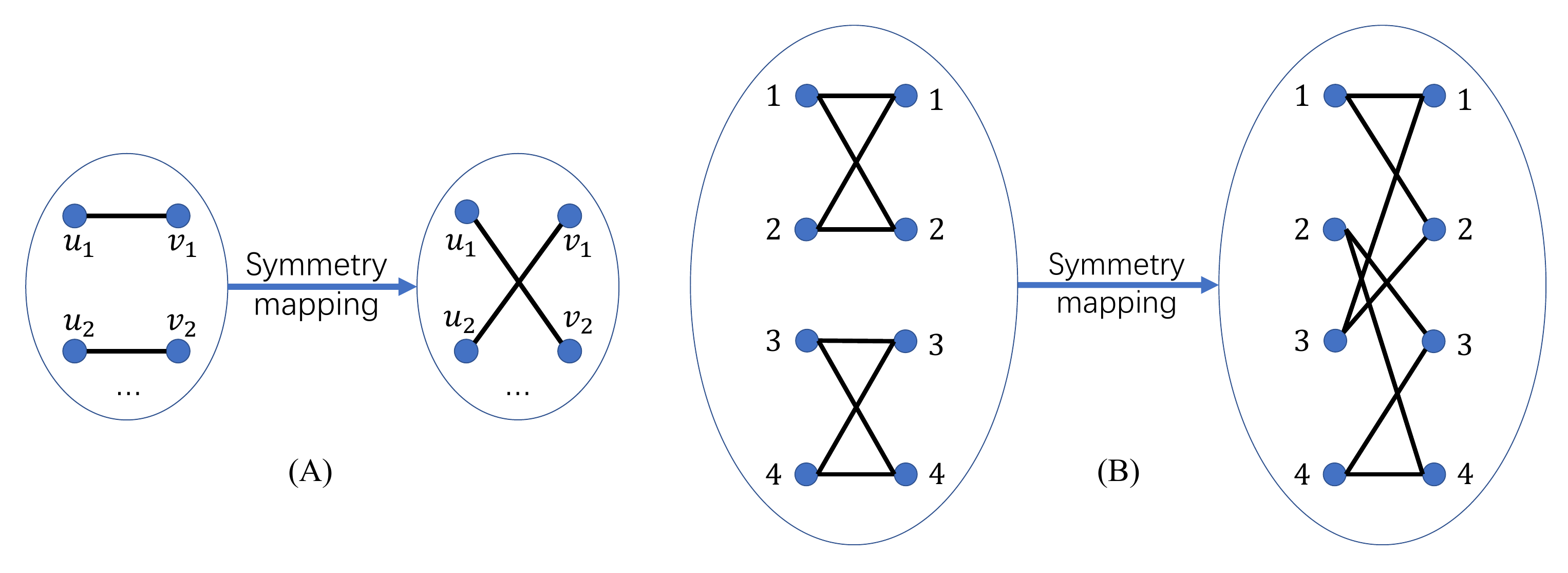}
  \caption{\footnotesize{(A) shows a symmetry mapping of the constraint $C$ by exchanging $u_1$ with $u_2$. (B) shows  an example of a symmetry mapping by exchanging the domain value $2$ with $3$ in $D_1$, where we set $D_1=D_2=\{1,2,3,4\}$ and the original permitted set $R=\{(1,1),(1,2),(2,1),(2,2),(3,3),(3,4),(4,3),(4,4)\}$. $R$ becomes $\{(1,1),(1,2),(2,3),(2,4),(3,1),(3,2),(4,3),(4,4)\}$ after such a symmetry mapping. Thus for an assignment $\sigma$, if $\sigma(x_1)=2,\sigma(x_2)=2$, then $C$ changes from satisfiable to unsatisfiable under $\sigma$, since $(2,2)$ does not belong to the permitted set $R$ any more; for an assignment $\tau$, if $\tau(x_1)=2,\tau(x_2)=3$, then $C$ changes from unsatisfiable to satisfiable under $\tau$, since $(2,3)$  belongs to the permitted set $R$ after this symmetry mapping.
}}\label{map}
\end{figure}
With the above properties of Model RB, we obtain the following interesting results.
\begin{theorem}\label{le:indistinguishable}
There exists an infinite set of satisfiable and unsatisfiable instances of Model RB such that this set is a fixed point under the symmetry mapping of changing satisfiability.

\end{theorem}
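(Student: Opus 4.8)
The plan is to exhibit the set explicitly and verify two things: first, that it is nonempty (hence infinite, since Model RB gives instances for every sufficiently large $n$), and second, that applying a symmetry mapping that flips the satisfiability of some constraint sends each member of the set to another member of the set. Concretely, for each admissible $n$ let $\mathcal{S}_n$ be the collection of instances $I$ of Model RB (with $k=2$, parameters as fixed in the excerpt) that are either (a) satisfiable with exactly one solution, or (b) unsatisfiable but self-unsatisfiable in the sense of the preceding definition, i.e. there is an assignment failing exactly one constraint. I would take the target set to be $\mathcal{S} = \bigcup_n \mathcal{S}_n$. That $\mathcal{S}_n \neq \emptyset$ for all large $n$ is where the earlier results do the work: Corollary 2.2 gives probability at least $1/6$ that a random RB instance has exactly one solution, and Lemmas 2.1 and 2.5 together give positive probability that a random instance is unsatisfiable and self-unsatisfiable; both events have positive probability, so instances of both types exist.

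The core of the argument is the fixed-point (involution) claim. Suppose $I \in \mathcal{S}_n$ is of type (a): let $\sigma^\ast$ be its unique solution, and pick any constraint $C = (X, R)$ with $X = (x_1, x_2)$ that is satisfied by $\sigma^\ast$. I want a symmetry mapping of $C$ that makes $C$ fail under $\sigma^\ast$ and creates no new solution, turning $I$ into a type-(b) instance. Writing $u_1 = \sigma^\ast(x_1)$, $v_1 = \sigma^\ast(x_2)$, I need to choose $u_2 \ne u_1$ and $v_2 \ne v_1$ with $(u_1,v_1),(u_2,v_2) \in R$ and $(u_1,v_2),(u_2,v_1) \notin R$; exchanging $u_1 \leftrightarrow u_2$ then removes $(u_1,v_1)$ from the permitted set, so $C$ is violated by $\sigma^\ast$, and the new instance $I'$ is unsatisfiable (the only candidate solution was $\sigma^\ast$). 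It remains to check $I'$ is self-unsatisfiable — but $\sigma^\ast$ already witnesses that $C$ can be the unique failed constraint, and by the genericity/isomorphism of the permitted sets in Model RB (every domain value shares the same properties) together with Lemma 2.5 applied to the perturbed instance, w.h.p. all variables are self-unsatisfiable. Conversely, if $I \in \mathcal{S}_n$ is of type (b), with assignment $\tau$ failing exactly the constraint $C$, the symmetry mapping that adds $(\tau(x_1), \tau(x_2))$ back to $R$ (i.e. the inverse exchange) makes $\tau$ satisfy $C$, hence satisfy all of $I'$; and since $I'$ differs from an instance with a unique-solution-type structure by one tuple, one argues $\tau$ is its only solution. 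The key point is that these two operations are mutually inverse symmetry mappings, so $\mathcal{S}$ is genuinely closed and the action is an involution on it — a fixed point of the "satisfiability-flipping" symmetry mapping in the sense claimed.

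The main obstacle I anticipate is the bookkeeping needed to guarantee that the symmetry mapping can always be chosen so that exactly the right tuples are present or absent in $R$ (the conditions $(u_1,v_1),(u_2,v_2)\in R$, $(u_1,v_2),(u_2,v_1)\notin R$), and that after the perturbation the resulting instance still lies in the phase-transition window and still satisfies the self-unsatisfiability conclusion of Lemma 2.5 — a single-tuple change does not affect the asymptotic first- and second-moment estimates, so this should go through, but it needs to be said carefully. A secondary subtlety is matching "exactly one solution" on the satisfiable side with "fails on exactly one constraint" on the unsatisfiable side so that the correspondence is a clean bijection rather than merely a containment; I would handle this by defining $\mathcal{S}$ so that each instance carries the data of the distinguished constraint $C$ (and, on the SAT side, its unique solution), making the involution manifest. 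The probabilistic lemmas only supply existence, not that a fixed individual instance has both properties simultaneously — which is impossible — so the statement must be read as: the set of all such instances is setwise invariant, and that is exactly what the involution establishes.
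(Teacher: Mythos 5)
Your overall architecture matches the paper's: the target set is (essentially) the instances with a unique solution together with the unsatisfiable-but-self-unsatisfiable ones, non-emptiness comes from Corollary~2.2 and Lemmas~2.1 and~2.5, and the fixed-point property is checked by applying the satisfiability-flipping symmetry mapping in each direction. But there is one genuine gap at the heart of the forward direction. You write that after exchanging $u_1$ with $u_2$ the instance ``is unsatisfiable (the only candidate solution was $\sigma^\ast$).'' That parenthetical is false as a deterministic claim: the symmetry mapping does not merely delete $(u_1,v_1)$ from $R$ --- it swaps the entire rows of $u_1$ and $u_2$, so roughly $2(1-p)d$ tuples change status, and in particular $(u_1,v_2)$ and $(u_2,v_1)$ become \emph{permitted}. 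An assignment that previously failed only at $C$ via one of these newly permitted tuples becomes a solution of the perturbed instance. So there are other candidate solutions, and ruling them out is precisely the step the paper supplies: a first-moment bound showing each newly permitted pair extends to a full solution with probability at most $d^{n-2}(1-p)^{rn\ln d-1}=\frac{1}{2(1-p)d^2}$, hence by a union bound over the $O(d)$ changed pairs the perturbed instance acquires a spurious solution with probability only $O(1/d)=o(1)$. The same computation is needed in the reverse direction to conclude that the repaired instance has \emph{exactly one} solution rather than several; you wave at this with ``one argues $\tau$ is its only solution,'' but the argument is the missing quantitative estimate, not a formality. Relatedly, your description of the perturbation as a ``single-tuple change'' understates its effect and is what makes the gap easy to overlook.

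A secondary, smaller point: by building self-unsatisfiability into the definition of the unsatisfiable half of your set, you take on the extra obligation of showing that the image of a type-(a) instance is self-unsatisfiable as a formula (every variable, not just the two in $C$), and invoking Lemma~2.5 ``applied to the perturbed instance'' is delicate because the perturbed instance is no longer distributed as a fresh random RB instance. The paper avoids this by defining the set simply as ``unique solution or no solution'' and invoking Lemma~2.5 only for the w.h.p.\ structure of the unsatisfiable members. Note also that even with the $O(1/d)$ bound, closure of the set under the mapping holds only with high probability, in both your write-up and the paper's.
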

\begin{proof}
Let $\mathcal{I}$ be the set of  RB instances with $n$ variables and $rn\ln d$ constraints, where each instance either has a unique solution or has no solution.

Assume that $I\in\mathcal{I}$ has exactly one solution $\sigma$, which happens with probability at least $1/6$ from Corollary \ref{lem:solutionpair}. For an arbitrary constraint $C'$, assume without loss that $C'=(X',R'),X'=(x,x_1),\sigma(x)=u,\sigma(x_1)=v,$ and $D,D_1$ are the domains of $x$ and $x_1$, respectively. By the symmetry requirement, there exist $u'\in D,v'\in D_1$ such that $(u',v')\in R', (u,v')\notin R',(u',v)\notin R'$, then we will exchange $u$ with $u'$. It is easy to see that this symmetry mapping  will convert $(u,v)$ into an unpermitted tuple and convert $(u,v'),(u',v)$ into permitted tuples, thus $\sigma$ is no longer a solution. However, it is possible that at most $2(1-p)d$ pairs $(u,*),(u',*)\in R$ can be expanded to new solutions (this is because by the symmetry requirement, the degree of each domain value of each variable is $(1-p)d$). Specifically, the probability that $(u,v')$ can be expanded to a new solution is at most $d^{n-2}(1-p)^{rn\ln d-1}=\frac1{2(1-p)d^2}$, thus a simple calculation yields that the probability that $I$ is still satisfiable after the symmetry mapping is at most $O(\frac1{d})=o(1)$.

Assume that $I\in\mathcal{I}$ is an unsatisfiable instance, then w.h.p. all variables in $I$ are self-unsatisfiable variables from Lemma \ref{lem:selfunsat}. This implies that there exist a constraint $C''\in\mathcal{C}_{x}$ and an assignment $\tau$  such that $C''$ is the only unsatisfied one under $\tau$.  Assume without loss that $C''=(X'',R''),X''=(x,x_2),\tau(x)=u,\tau(x_2)=w,$ and $D,D_2$ are the domains of $x$ and $x_2$, respectively. It is apparent that $(u,w)\notin R''$. By our symmetry requirement, there exist $u'\in D,w'\in D_2$, where $(u,w'),(u',w)\in R''$ and $(u',w')\notin R''$, such that a symmetry mapping of exchanging $u$ with $u'$ will  convert $(u,w)$ into a permitted tuple, thus $C''$ becomes satisfiable under $\tau$. Moreover, using a similar argument as above, we can see that the probability that the new pairs $(u,*),(u',*)\in R$ could expand to solutions is at most $O(\frac1{d})=o(1)$. Thus w.h.p. $I$ has only one solution after this symmetry mapping.

From the above two cases we can see that for any $I\in\mathcal{I}$, the symmetry mapping changes its satisfiability, however, $I$ still belongs to $\mathcal{I}$ after the mapping, thus $\mathcal{I}$ can be considered as a fixed point under the symmetry mapping.
\end{proof}

In this section, it has been shown that we can construct satisfiable and unsatisfiable instances using exactly the same method and the same parameter values. Moreover, these satisfiable and
unsatisfiable instances can be transformed into each other by performing the same mapping. This property is very similar to that of the self-referential proposition introduced by Kurt G\"{o}del  \cite{godel1931}
in order to prove that such a proposition can be neither proved nor disproved (i.e. whether this proposition is true or false cannot be distinguished in finite formal systems).
G\"{o}del's results reveal the fundamental difference  between the syntax defined by rules and the semantics defined by models. An algorithm is a finite sequence of rules, which can also be
viewed as a finite formal system. On the other hand, the exhaustive search method is used to determine, based on the semantic definition of a property, whether this property is satisfied by a model (an assignment).
 Inspired by G\"{o}del's idea, we call these satisfiable and unsatisfiable instances the most indistinguishable examples because the self-referential property makes such examples syntactically hard to be distinguished from each other.
\section{Main Results}
Proving complexity lower bounds (algorithmic impossibility results) for a given problem is essentially reasoning and making conclusions about an infinite set of algorithms. In mathematics, any such conclusion should be based on assumptions about the nature of the infinite set. These assumptions must be consistent with the reality and usually appear as axioms. Similar to many combinatorial problems, the general CSP has no global structure that can be exploited to design algorithms. Currently, the only exact algorithm available for solving CSP is a divide-and-conquer algorithm systematically searching the solution space with various pruning strategies. 

In this paper, we view an algorithm as a finite formal system which is defined by a finite set of symbols and inference rules. It is well known that finite formal systems are stronger than algorithms (Turing machines) because algorithms must be finite formal systems but finite formal systems are not necessarily algorithms. Here we use finite formal systems to solve CSP, i.e. determining if a CSP instance is satisfiable. We only need to assume that this task is finished by dividing the original problem into subproblems. Under this assumption, we have the following lemma.

\begin{lemma}\label{lm:subproblem}
If a CSP problem with $n$ variables and domain size $d$ can be solved in $T(n)=O(d^{cn})$ time ($0<c<1$ is a constant), then at most $O(d^c)$ subproblems with $n-1$ variables are needed to solve the original problem.
\end{lemma}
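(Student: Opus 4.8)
The plan is to read the hypothesis through the ``divide-and-conquer'' assumption stated just before the lemma: a finite formal system that decides satisfiability of an $n$-variable CSP does so by producing some number $g=g(n)$ of subinstances on $n-1$ variables (the canonical choice being to single out one variable and restrict, which a priori gives $d$ subinstances, one per domain value, although a cleverer system may merge or prune some of them), recursively deciding each of them, and then combining the answers. First I would record the resulting recurrence for the running time,
$$T(n)\ \ge\ g(n)\,T(n-1)+h(n),$$
where $h(n)\ge 0$ is the cost of splitting and recombining; since only the product matters, $h$ can be dropped.

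Next I would unroll this recurrence from $n$ down to a base case of $n_0=O(1)$ variables, which still costs $\Omega(1)$ time (the input must at least be scanned), obtaining $T(n)\ \ge\ T(n_0)\prod_{j=n_0+1}^{n}g(j)$. For a uniform divide-and-conquer the branching factor is the same at every level --- and this uniformity is in fact forced in the setting of Model RB, where the symmetry requirement makes every variable and every domain value interchangeable, so that the subinstances produced at each level are mutually isomorphic --- and writing $g(j)\equiv g$ gives $T(n)=\Omega\!\left(g^{\,n-n_0}\right)$. Combining this with the hypothesis $T(n)=O(d^{cn})$, there is a constant $C$ with $g^{\,n-n_0}\le C\,d^{cn}$, hence
$$g\ \le\ C^{1/(n-n_0)}\,d^{\,cn/(n-n_0)}\ =\ C^{1/(n-n_0)}\,d^{c}\,d^{\,cn_0/(n-n_0)}.$$
Both $C^{1/(n-n_0)}$ and $d^{\,cn_0/(n-n_0)}$ stay bounded as $n\to\infty$ (indeed they tend to $1$ when $d$ is polynomial in $n$, as in Model RB), so $g\le O(d^{c})$, which is the claim.

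The step I expect to be the main obstacle is justifying that the recursion genuinely ``pays'' the factor $g(j)$ at \emph{every} level rather than only on average. In full generality $\prod_j g(j)=O(d^{cn})$ bounds the geometric mean of the $g(j)$ but not each one individually --- a single level could branch into $d$ subproblems while later levels compensate. The clean way around this is to argue by contradiction: if some level branched into $\omega(d^{c})$ subinstances, then, because the same reduction is applied recursively to instances of the \emph{same} CSP family with the same per-level overhead, the total number of leaf instances (each contributing $\Omega(1)$ to the running time) would outgrow $C\,d^{cn}$ for every constant $C$ and all large $n$, contradicting $T(n)=O(d^{cn})$. Alternatively, and more simply, one may read the hypothesis as asserting that $T(n)$ is tight, $T(n)=\Theta(d^{cn})$, in which case the one-step bound $g(n)\le T(n)/T(n-1)=O(d^{c})$ is immediate.
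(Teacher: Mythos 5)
Your proposal is correct and lands on essentially the same argument as the paper: the paper's entire proof is the one-step quotient $T(n)=O(d^{cn})=O(d^c\,d^{c(n-1)})=O(d^c)\,T(n-1)$, i.e.\ exactly the ``simpler alternative'' you give in your final sentence. The obstacle you flag --- that $T(n)=O(d^{cn})$ only \emph{upper}-bounds $T(n-1)$, so the quotient bound implicitly requires reading the hypothesis as tight (or uniform across levels, as in your recurrence-unrolling) --- is genuine and is passed over silently in the paper's one-line proof, so your more careful treatment is, if anything, an improvement on it.
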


\begin{proof}
It is easy to see that $d$ subproblems with $n-1$ variables are sufficient to solve the original problem. By condition, a CSP problem with $n-1$ variables can be solved in $T(n-1)$ time. Note that
$$T(n)=O(d^{cn})=O(d^cd^{c(n-1)})=O(d^c)T(n-1),$$
 thus at most $O(d^c)$ subproblems with $n-1$ variables are needed to solve the original problem.
\end{proof}
The above lemma is a key to proving the main results of this paper. For a better understanding, we take the classical sorting problem as an example. Assume that our goal is to prove that sorting $n$ elements cannot be done in $T(n)=O(n)$ time. By condition, we have $T(n)=O(n-1)+O(1)=T(n-1)+O(1)$. This means that we need a subproblem of $n-1$ elements with additional $O(1)$ operations to solve the original problem. We can show by contradiction that this is impossible and so finish the proof.

\begin{theorem}\label{th:main}
Model RB  cannot be solved in  $O(d^{c n})$ time for any constant $0<c<1$.
\end{theorem}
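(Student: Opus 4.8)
The plan is to argue by contradiction, combining the subproblem reduction of Lemma \ref{lm:subproblem} with the self-referential family $\mathcal{I}$ of Theorem \ref{le:indistinguishable} in a diagonalization modeled on G\"{o}del's construction. Suppose some finite formal system (algorithm) decides satisfiability of Model RB in $T(n) = O(d^{cn})$ time for a fixed constant $0 < c < 1$. By Lemma \ref{lm:subproblem}, on every instance the verdict is determined by at most $O(d^c)$ subproblems on $n-1$ variables; since $c < 1$ this is $o(d)$, strictly fewer than the $d$ subproblems one gets by branching a single variable $x$ over its entire domain. So the algorithm's answer depends only on restrictions of the instance that fix $x$ to an $o(1)$-fraction of its $d$ possible values.

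Next I would run the algorithm on an instance $I \in \mathcal{I}$ and split into the two cases of Theorem \ref{le:indistinguishable}. Whether $I$ has a unique solution $\sigma$ or is unsatisfiable and (w.h.p., by Lemma \ref{lem:selfunsat}) self-unsatisfiable with a near-solution $\tau$ violating exactly one constraint, there is a critical triple — a variable $x$, a constraint $C \ni x$, and two domain values $u \neq u'$ of $x$ — such that the symmetry mapping exchanging $u$ and $u'$ inside the permitted set of $C$ flips the satisfiability of $I$ while w.h.p. keeping the instance inside $\mathcal{I}$; this is precisely the fixed-point property. The key structural fact is that this mapping changes only the rows $x = u$ and $x = u'$ of the single permitted set of $C$, so that $I|_{x=a} = I'|_{x=a}$ for every value $a \notin \{u,u'\}$.

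The final step is the counting mismatch: the algorithm consults at most $o(d) < d-1$ values of the branching variable, so the critical pair $u, u'$ can be chosen outside the consulted set, the symmetry requirement of Model RB (all domain values of a variable are interchangeable, so the critical configuration can be placed at any prescribed pair of values) guaranteeing that such a placement exists. On the subproblems the algorithm actually inspects, $I$ and $I'$ are then identical instances, so it returns the same verdict on both — contradicting that $I$ and $I'$ have opposite satisfiability. Hence no $O(d^{cn})$-time algorithm exists, which is the claim.

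The step I expect to be the main obstacle is making this counting/quantifier interchange fully rigorous: the algorithm is fixed before $I$ is chosen, yet which values (and indeed which variable) it examines may depend on the instance, and after the flip its search could branch differently. One must therefore argue that inside the self-referential family $\mathcal{I}$ — closed under a mapping that perturbs only an $o(1)$-fraction of the local configurations — the $o(d)$ subproblems cannot encode enough information to separate $I$ from $I'$. This is exactly where the paper's standing assumption that an exact CSP solver is a divide-and-conquer search over subproblems, together with the fixed-point structure of $\mathcal{I}$, carries the weight, in the same way that the undefinability of truth carries G\"{o}del's argument; the probabilistic ``expansion to new solutions'' bounds from Theorem \ref{le:indistinguishable} must also be re-used so that the flip is clean with high probability.
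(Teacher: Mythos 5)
Your proposal follows essentially the same route as the paper's own proof: contradiction via Lemma \ref{lm:subproblem} to reduce to $O(d^c)=o(d)$ subproblems obtained by assigning values to a single variable $x$, then applying the symmetry mapping of Theorem \ref{le:indistinguishable} with the exchanged value $u'$ chosen outside the consulted set $\widetilde{D}$, so that the inspected subproblems are unchanged while the satisfiability of the whole instance flips. Even the difficulty you flag at the end (that the set of consulted values may depend on the instance, and that the argument leans entirely on the standing assumption that solving means dividing into subproblems) is precisely where the paper's proof places its weight, so the two arguments coincide in both substance and in their load-bearing step.
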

\begin{proof}
For the sake of simplicity, we will  prove that Theorem \ref{th:main} holds  for Model RB with  $k=2$ by contradiction. Let $I$ be a RB instance with $n$ variables and $rn\ln d$ constraints.
Suppose there exists some constant $0<c<1$ such that  $I$ can be solved in $O(d^{c n})$ time, then Lemma \ref{lm:subproblem} implies that $I$ can also be solved by assigning at most $O(d^c)$ values to an arbitrary variable, say $x$, and then solving the resulting subproblems (with $n-1$ variables)  which require $O(d^{c (n-1)})$ time. We will show that there exist instances where the $O(d^c)$ subproblems produced by assigning $O(d^c)$ values to $x$ are impossible to determine its satisfiability. For convenience, let $D$ be the domain of $x$, $\widetilde{D}$ be the set of $O(d^c)$ values which have been assigned to $x$ ($|\widetilde{D}|=O(d^c)$), and $\mathcal{C}_{x}$ be the set of constraints containing $x$.

Follow the strategy of the proof of Theorem \ref{le:indistinguishable}, if $I$ is an instance having exactly one solution $\sigma$, then the constraint $C'$ will be  arbitrarily selected from $\mathcal{C}_{x}$.  Note that $O(d^c)$  is $o(1)$ compared with the domain size $d$, thus the probability that $u$ belongs to $\widetilde{D}$ is $o(1)$. Therefore the symmetry mapping in Theorem \ref{le:indistinguishable} will be performed by exchanging $u$ with $u'$, where $u'$ is chosen from $D\backslash \widetilde{D}$, then $I$ becomes unsatisfiable.
Similarly, if $I$ is an unsatisfiable instance, then the symmetry mapping of exchanging $u$ with $u'$, where $u'$ is chosen from $D\backslash \widetilde{D}$, will  convert $(u,w)$ into a permitted tuple, thus $I$ becomes satisfiable.

Note that in either of the above cases, the $O(d^c)$ subproblems with $n-1$ variables remain unchanged, while the satisfiability of the original problem with $n$ variables has been changed after performing the symmetry mapping. We can conclude that the $O(d^c)$ subproblems are insufficient thus impossible to determine if $I$ is satisfiable or unsatisfiable. This completes the proof of Theorem \ref{th:main}.
\end{proof}

As mentioned before, finite formal systems are stronger than algorithms (Turing machines). The above theorem indicates that it is impossible to find a finite formal system for solving Model RB that is substantially better than (and thus can replace) the exhaustive search method based on the semantic definition of satisfiability. More interestingly, the above proof follows the same method (i.e. diagonalization and self-reference) used by Kurt G\"{o}del \cite{godel1931} and Alan Turing \cite{turing1936}, respectively, in their epoch-making papers of proving logical and computational impossibility results. This method is very simple, elegant and powerful. On the other hand, the nice mathematical properties and the extreme hardness of Model RB are the key to making it feasible and effective for proving algorithmic impossibility results. We hope that the basic idea of this paper (i.e. Constructing the Most Indistinguishable Examples) will open the way for proving complexity lower bounds that has always been a challenging task even for many polynomial-time solvable problems.

It is worth mentioning that Xu and Li \cite{xu2006} established a direct link between proved phase transitions and the abundance with hard examples by proving that Model RB has both of these two properties. They also made a detailed comparison between Model RB and some other well-studied models with phase transitions, and stated that ``\emph{such mathematical tractability should be another advantage of RB/RD, making it possible to obtain some interesting results which do not hold or cannot be easily obtained for random 3-SAT}". This paper shows that as expected, interesting results can be achieved in an uncomplicated manner. More than two thousand years ago, Laozi, a great Chinese thinker and philosopher, once said: ``\emph{The greatest truths are the simplest}". We believe that the truth of computational hardness should also follow this basic and universal principle. So we can say that the results of this paper are surprising, but not hard to understand.

CSP can be encoded into SAT by use of the log-encoding \cite{walsh2000} which introduces a new Boolean variable for each bit in the domain value of each CSP variable and thus uses a logarithmic number of Boolean variables to encode domains.  It must be emphasized that each clause of these encoded SAT instances could be very long with $\Theta (\log_{2}{d})$ Boolean variables if the domain size $d$ grows with the number of CSP variables $n$.  This is in sharp contrast to 3-SAT that is very short in clause length and has received the most attention in the past half-century.
For encoded SAT instances of Model RB using the log-encoding, we have totally $N=n\log_{2}{d}$ Boolean variables, $O(Nd^k)$ clauses and $O(Nd^k\log_{2}{d})$ literals. Note that $d^{cn}=2^{cn\log_2d}=2^{cN}$, so it is easy to derive the following corollary from Theorem \ref{th:main}.
\begin{corollary}
SAT with $N$ Boolean variables cannot be solved in $O(2^{cN})$ time for any constant $0<c<1$.
\end{corollary}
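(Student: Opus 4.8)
The plan is to derive the corollary directly from Theorem~\ref{th:main} by composing the log-encoding with a hypothetical fast SAT solver; there is no need to re-run the self-reference argument at the level of SAT. First I would fix the encoding: for a Model RB instance $I$ with $n$ variables and domain size $d=n^\alpha$, introduce $\lceil\log_2 d\rceil$ fresh Boolean variables to represent, in binary, the value taken by each CSP variable $x_i$, so that the encoded formula $\phi_I$ has $N=n\lceil\log_2 d\rceil$ Boolean variables. Each permitted tuple of a constraint $C_i=(X_i,R_i)$ becomes a conjunction over the bits of its $k$ variables, and $C_i$ is then expressed by a CNF over the $k\lceil\log_2 d\rceil$ bits involved; when $d$ is not a power of $2$ one also adds, for each CSP variable, the clauses forbidding the bit patterns $\ge d$. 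This produces $O(Nd^k)$ clauses and $O(Nd^k\log_2 d)$ literals, and $\phi_I$ is satisfiable if and only if $I$ is.

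Next I would record that this encoding is computable in time polynomial in $n$: since $k$ and $\alpha$ are constants, $d^k=n^{\alpha k}$ and $N=O(n\log n)$, so the size of $\phi_I$ is $n^{O(1)}$. Hence, if SAT on $N$ Boolean variables could be solved in $O(2^{cN})$ time for some constant $0<c<1$, then running the encoding and then such a solver would decide the satisfiability of every Model RB instance in time $n^{O(1)}+O(2^{cN})$.

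The key step is the exponent bookkeeping. We have $2^{cN}=2^{cn\lceil\log_2 d\rceil}\le 2^{cn(\log_2 d+1)}=d^{cn}\cdot 2^{cn}$. Since $d=n^\alpha\to\infty$, for any $\varepsilon>0$ we have $2^{cn}=2^{(c/\log_2 d)\cdot n\log_2 d}=d^{(c/\log_2 d)n}\le d^{\varepsilon n}$ once $n$ is large, because $c/\log_2 d=c/(\alpha\log_2 n)\to 0$. Therefore $2^{cN}\le d^{(c+\varepsilon)n}$, and choosing $\varepsilon$ so small that $c':=c+\varepsilon<1$ yields an algorithm solving Model RB in $n^{O(1)}+O(d^{c'n})=O(d^{c'n})$ time for the constant $0<c'<1$, contradicting Theorem~\ref{th:main}. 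Hence no such $c$ exists, which is the assertion of the corollary.

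The main obstacle is precisely this last piece of bookkeeping: one must ensure that the ceiling in $\lceil\log_2 d\rceil$ together with the polynomial encoding overhead does not push the exponent up to or past $1$. This is where the defining feature of Model RB — that the domain size $d$ grows with $n$, so $\log_2 d\to\infty$ — is used: it makes the extra factor $2^{cn}$ subexponential relative to the base $d$, so the loss $c\mapsto c+o(1)$ in the exponent is harmless. Everything else (correctness and polynomial size of the log-encoding, and the fact that Theorem~\ref{th:main} is a worst-case statement) is standard.
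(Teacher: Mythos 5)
Your proposal is correct and follows essentially the same route as the paper: the paper derives the corollary in one line from Theorem~\ref{th:main} by observing that the log-encoding gives $N=n\log_2 d$ Boolean variables and $d^{cn}=2^{cN}$. You simply carry out that same reduction more carefully, handling the ceiling in $\lceil\log_2 d\rceil$ and the polynomial encoding overhead by absorbing them into an exponent $c+\varepsilon<1$, which is a harmless refinement of the paper's argument rather than a different approach.
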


The above corollary holds for SAT with no restriction on the clause length, and so is not directly applicable to $k$-SAT with constant $k\ge 3$ whose lower bounds can be obtained by reduction from encoded SAT instances of Model RB. Other CSP instances of domain size $l$ can also be encoded from Model RB using $N=n\log_{l}{d}$ variables, thus cannot be solved in $O(l^{cN})$ time for any constant $0<c<1$.

It is well-known that SAT is NP-complete \cite{cook1971}, and so it follows that P $\ne$ NP.

\newpage


\end{document}